\documentclass[11pt,a4paper,reqno]{amsart}
\usepackage{multirow}
\usepackage[centertags]{amsmath}
\usepackage{amsfonts}
\usepackage{amssymb}
\usepackage{amsthm}
\usepackage{newlfont}
\usepackage{a4}
\usepackage{enumerate}
\usepackage[pdftex]{graphicx,color}
\theoremstyle{definition}
\newtheorem{defn}{Definition}[section]
\newtheorem{thm}[defn]{Theorem}

\newtheorem{tvr}[defn]{Proposition}
\newtheorem{cor}[defn]{Corollary}
\theoremstyle{remark}
\newtheorem{example}{Example}[section]
\newcommand{\id}{\mathfrak{1}}
\usepackage{bbm}
\usepackage{cite}

\newlength{\defbaselineskip}
\setlength{\defbaselineskip}{\baselineskip}
\newcommand{\setlinespacing}[1]%
           {\setlength{\baselineskip}{#1 \defbaselineskip}}

%
%
\renewcommand{\i}{\mathrm{i}}
\newcommand{\map}{\rightarrow}

\newcommand{\q}{\quad}

\renewcommand{\epsilon}{\varepsilon}

\newcommand{\ep}{\varepsilon}
\newcommand{\la}{\lambda}
\newcommand{\al}{\alpha}
\newcommand{\om}{\omega}
\renewcommand{\rho}{\varrho}
\renewcommand{\phi}{\varphi}

\newcommand{\R}{{\mathbb{R}}}

\newcommand{\N}{{\mathbb N}}

\newcommand{\Com}{{\mathbb C}}
\newcommand{\Z}{\mathbb{Z}}
\newcommand{\C}{\mathbb{C}}

\newcommand{\set}[2]{\left\{#1 \, |\, #2 \right\}}
\newcommand{\setb}[2]{\left\{#1 \, \mid\, #2 \right\}}
\newcommand{\setm}[2]{\left\{#1 \,\, \big|\,\, #2 \right\}}
\newcommand{\abs}[1]{\left\vert#1\right\vert}
\newcommand{\wt}{\widetilde}

\newcommand{\sca}[2]{\langle #1,\, #2\rangle}
\newcommand{\comb}[2]{\begin{pmatrix}
     #1\\
     #2
  \end{pmatrix}}

\addtolength{\topmargin}{-25pt}
\addtolength{\textwidth}{115pt}
\addtolength{\textheight}{90pt}
\addtolength{\oddsidemargin}{-55pt}
\addtolength{\evensidemargin}{-65pt}

\begin{document}

\title[Weight-Lattice Discretization]
{Weight-Lattice Discretization of Weyl-Orbit Functions}

\author[J. Hrivn\'{a}k]{Ji\v{r}\'{i} Hrivn\'{a}k$^{1}$}
\author[M. A. Walton]{Mark A. Walton$^{2}$}

\date{\today}
\begin{abstract}\small\ 
Weyl-orbit functions have been defined for each simple Lie algebra, and permit Fourier-like analysis on the fundamental region of the corresponding affine Weyl group.  They have also been discretized, using a refinement of the coweight lattice, so that digitized data on the fundamental region can be Fourier-analyzed. The discretized orbit function has arguments that are redundant if related by the affine Weyl group, while its labels, the Weyl-orbit representatives, invoke the dual affine Weyl group.  Here we discretize the orbit functions in a novel way, by using the weight lattice.  A cleaner theory results, with symmetry between the arguments and labels of the discretized orbit functions.  Orthogonality of the new discretized orbit functions is proved, and leads to the construction of unitary, symmetric matrices with Weyl-orbit-valued elements.  For one type of orbit function, the matrix coincides with the Kac-Peterson modular $S$ matrix, important for Wess-Zumino-Novikov-Witten conformal field theory.   

\end{abstract}

\maketitle
\noindent
$^1$ Department of Physics, Faculty of Nuclear Sciences and Physical Engineering, Czech Technical University in Prague, B\v{r}ehov\'a~7, CZ-115 19 Prague, Czech Republic\\
$^2$ Department of Physics and Astronomy, University of Lethbridge, Lethbridge, Alberta, T1K 3M4, Canada
\vspace{10pt}

\noindent
\textit{E-mail:} jiri.hrivnak@fjfi.cvut.cz, walton@uleth.ca

\medskip
\noindent
\textit{Keywords:} discretized Weyl-orbit functions, Wess-Zumino-Novikov-Witten conformal field theory, discrete orthogonality, affine modular data, Kac-Peterson matrices

\section{Introduction}

Weyl-orbit functions \cite{KP1, KPtrig, KP2, KP3} have been defined for all simple Lie algebras $X_n$ ($n$ is the rank, and $X=A, B, C, D, E, F$ or $G$).  They give rise to various Fourier-like analyses of data on the fundamental region $F$ of the corresponding affine Weyl groups. For the purposes of this introduction, we will not distinguish between the fundamental region $F$ and related regions $F^\sigma$ -- see Sect.~3 and eqn.~(\ref{FsFlex}).

Discretized versions have been studied for the analysis of digitized data on the same fundamental region. The digitization is controlled by a positive integer $M$, which can be considered the resolution.  The discretized Weyl-orbit functions \cite{HP, HP2} have as their domain a fragment $F_M$ of the lattice $P^\vee/M$ in $F$, where $P^\vee$ is the coweight lattice of $X_n$. Let us call $F_{P,M}$ a $P/M$-fragment of $F$.  For fixed $M$, a set of Weyl orbits can be specified such that the associated discretized Weyl-orbit functions are pairwise orthogonal \cite{MP2, HP, HMP}. Their number equals the number of points in the fragment $F_M$. The dominant-weight  representatives of the Weyl orbits fill out a  $P$-fragment of the $M$-dilation of  $F^\vee$, the fundamental region of the {\it dual} affine Weyl group associated with $X_n$.  

The properties of the orbit functions have been intensively investigated \cite{KP1, KPtrig, KP2, KP3, HaHrPa2, sin, HP2d, CzHr}, including their discretizations \cite{HP, HMP, MP2, HP2}.  Recently, a remarkable similarity was noticed between one class of discretized orbit functions with important objects in conformal field theory \cite{HW} (see eqn.~(27) therein).  Specifically, the modular $S$-matrix of the Wess-Zumino-Novikov-Witten (WZNW) conformal field theories, first written by Kac and Peterson \cite{KP}, involves an alternating sum over the Weyl orbit of a dominant weight, as does the so-called $S$-function.  In \cite{HP}, this similarity was exploited by using known attributes of the Kac-Peterson matrices to uncover new, analogous properties of the discretized Weyl-orbit functions. 

However, there is one striking, important difference between the discretized $S$-functions and the Kac-Peterson modular $S$-matrices (also known as the affine modular $S$ matrices).  While a Kac-Peterson matrix is symmetric, there is asymmetry between the labels and arguments of the orbit functions.  Arguments lie in the $P^\vee/M$-fragment $F_M$ of $F$, while the labels are elements of the $P$-fragment of $MF^\vee$.  The affine Weyl group of $X_n$ is relevant for the arguments, while for the labels, it is the dual affine Weyl group. 

Here we point out that a different discretization of the Weyl-orbit functions removes this argument-label asymmetry. This new discretization is somewhat more natural than the original one introduced in \cite{HP}, and the resulting values of the new $S$-functions are identical to the elements of the Kac-Peterson matrices. Only the starting point needs to be changed: we use a $P/M$-fragment $F_{P,M}$  as the discretization of $F$, instead of the $P^\vee/M$-fragment $F_M$ of \cite{HP}. The latter  should perhaps now be denoted $F_{P^\vee,M}$.  

Although part of our motivation comes from the appearance of one class of orbit functions  in conformal field theory, all orbit functions have cleaner, more symmetric properties with the new, finer weight-lattice discretization.  For completeness we treat together here all the different classes of orbit functions related to sign homomorphisms of the Weyl group of $X_n$ (see Sect.~3 below). As an interesting spin-off, we obtain a generalization of the Kac-Peterson formula for arbitrary sign homomorphism, eqn.~(\ref{sigmaKP}).  

Let us describe the plan of this paper.  In the next section, notation is set and properties of the relevant affine Weyl groups are reviewed. Section 3 describes the sign homomorphisms and the orbit functions built using them. Section 4 counts the numbers of arguments and labels of the discretized orbit functions as grid elements and the orthogonality of the $P$-discretized orbit functions is established. Section 5 treats the affine, or Kac-Peterson,  modular $S$-matrix, and establishes the identity of its elements with the values of weight-discretized Weyl-orbit functions.  The discrete transforms are also discussed.   Sect.~6 is our conclusion, which includes a short comparison between the new $P$- and previous $P^\vee$-discretizations of orbit functions.

\vskip0.5cm\section{Pertinent properties of affine Weyl groups}

\subsection{Roots and weight lattice}\

The notation, established in \cite{HP}, is used. Recall that, to the simple Lie algebra of rank $n$, corresponds the set of simple roots $\Delta=\{\al_1,\dots,\al_n\}$ of  the root system $\Pi$ \cite{BB,Bour,H2}. The set $\Delta$ spans the Euclidean space $\R^n$, with the scalar product denoted by $\sca{\,}{\,}$. The set of simple roots determines partial ordering $\leq$ on $\R^n$ -- for $\la,\nu \in \R^n$ it holds that  $\nu\leq \la$ if and only if $\la-\nu = k_1\al_1+\dots+ k_n \al_n$ with $k_i \in \Z_{\geq 0}$ for all $i\in I$ with $I:=\{1,\dots,n\}$.   The root system $\Pi$ and its set of simple roots $\Delta$ can be defined independently of Lie theory and such sets which correspond to compact simple Lie groups are called crystallographic \cite{H2}. There are two types of sets of simple roots -- the first type with roots of only one length (simply-laced), denoted conventionally as $A_{n\geq 1}$,  $D_{n\geq 4}$, $E_6$, $E_7$, $E_8$; and the second type
with two different lengths of roots, denoted $B_{n\geq 3}$, $C_{n\geq 2}$, $G_2$ and $F_4$. For the second type systems, the set of simple roots consists of short simple roots $\Delta_s$ and long simple roots $\Delta_l$, i.e. the following disjoint decomposition is given,
\begin{equation}\label{sl}
\Delta=\Delta_s\cup\Delta_l.
\end{equation}
The standard objects, related to the set $\Delta\subset \Pi$, are the following \cite{BB,H2,DMS}:
\begin{itemize}
\item the highest root $\xi \in \Pi$ with respect to the partial ordering $\leq$ restricted on $\Pi$,
\item the marks $m_1,\dots,m_n\in \N$ of the highest root $\xi =: -\al_0=m_1\al_1+\dots+m_n\al_n$, together with $m_0:=1$, the marks are summarized in Table 1 in \cite{HP},
\item
the root lattice $Q=\Z\al_1+\dots+\Z\al_n $,
\item
the $\Z$-dual lattice to $Q$, the coweight lattice 
\begin{equation*}
 P^{\vee}=\set{\om^{\vee}\in \R^n}{\sca{\om^{\vee}}{\al}\in\Z,\, \forall \al \in \Delta}=\Z \om_1^{\vee}+\dots +\Z \om_n^{\vee},
 \end{equation*}
 where $\{\omega_1^\vee,\ldots,\omega_n^\vee\}$ are the fundamental coweights, 
with 
\begin{equation}\label{aldom} 
\sca{\al_i}{ \om_j^{\vee}}=\delta_{ij},
\end{equation}
\item
the dual root lattice $Q^{\vee}=\Z \al_1^{\vee}+\dots +\Z \al^{\vee}_n$, where $\al^{\vee}_i=2\al_i/\sca{\al_i}{\al_i}$, $i\in I$,
\item
the comarks $q_1, \dots ,q_n\in \N$ of the highest root $\xi= q_1\al_1^{\vee} + \dots + q_n \al_n^{\vee}$ and $q_0:=1$; for simply-laced root systems the marks and the comarks coincide, for nonsimply-laced systems the comarks are summarized in Table \ref{tab},
\item
the Coxeter number $m=\sum_{i\in \hat I} m_i$ and the dual Coxeter number $g=\sum_{i\in \hat I} q_i$, where $ \hat I := \{0,1,\dots, n \}$,
\item the $\Z$-dual lattice to $Q^\vee$, the weight lattice
\begin{equation*}
 P=\set{\om\in \R^n}{\sca{\om}{\al^{\vee}}\in\Z,\, \forall \al^{\vee} \in Q^\vee}=\Z \om_1+\dots +\Z \om_n,
\end{equation*} 
where $\{\omega_1,\ldots,\omega_n\}$ are the fundamental weights, with 
\begin{equation}\label{alvdom} 
\sca{\al_i^{\vee}}{ \om_j}=\delta_{ij},
\end{equation}
\item
the Cartan matrix $C$ with elements $C_{ij}=\sca{\al_i}{\al^{\vee}_j}$
and with the properties
\begin{equation}\label{Cartanm}
 \al^\vee_j = \sum_{k\in I}  \om_k^{\vee} C_{kj}\ ,\quad  \alpha_j = \sum_{k\in I}  C_{jk}  \om_k \  ,
\end{equation}
\item the Gram determinant of the $\al^\vee$-basis $d=\det \sca{\al^{\vee}_i}{\al^{\vee}_j}$ determining the orders of the quotient group $P/Q^\vee $ 
 \begin{equation}\label{dnum}
 d=|P/Q^\vee |=\frac{2}{\langle \alpha_1,\alpha_1\rangle}\cdots\frac{2}{\langle \alpha_n,\alpha_n\rangle} \det C ;
\end{equation}
for simply-laced systems $d$ concides with $\det C$, for nonsimply-laced systems the numbers $d$ are summarized in Table \ref{tab}.
\end{itemize}
{
\begin{table}
\begin{tabular}{|c||c|c|c|c|c|c|}
\hline
Type           & $d$ & Comarks $q_1,\dots,q_n$  & $R^{\sigma^s}$  & $R^{\sigma^l}$ &  $q^{\sigma^s}$ & $q^{\sigma^l}$ \\
\hline\hline
$B_n\ (n\geq3)$& $4$ & $1,2,\dots,2,1$  & $r_n$ & $r_0,\, r_1, \dots, r_{n-1}$ & $1$ & $2n-2$ \\ \hline
$C_n\ (n\geq2)$& $2^n$ & $1,1,\dots,1,1$  & $r_1, \dots, r_{n-1}$ & $r_0,\,r_n$  & $n-1$ & $2$ \\ \hline
$G_2$ &      $3$     &    $2,1$    &$r_2$     & $r_0,\,r_1$ &   $1$ & $3$ \\ \hline
$F_4$ &    $4$      &    $2,3,2,1$              &$r_3,r_4$ & $r_0,\,r_1,\, r_2$ & $3$ & $6$\\ \hline
\end{tabular}
\medskip
\caption{The orders of $|P/Q^\vee |$, the comarks, the decomposition of the sets of generators $R$ and of the dual Coxeter number $g$ of nonsimply-laced root systems. Numbering of the simple roots is standard (see e.g. Figure 1 in \cite{HP}).}\label{tab}
\end{table}
}

The $n$ reflections $r_\al$, $\al\in\Delta$ in $(n-1)$-dimensional mirrors orthogonal to simple roots intersecting at the origin are given explicitly for $a\in\R^n$ by
\begin{equation}\label{refl}
r_{\al}a=a-\sca{\al}{a}\al^\vee. 
\end{equation}
and the affine reflection $r_0$ with respect to the highest root $\xi$ is given by
\begin{equation}\label{aWeyl}
r_0 a=r_\xi a + \frac{2\xi}{\sca{\xi}{\xi}}\,,\qquad
r_{\xi}a=a-\frac{2\sca{a}{\xi} }{\sca{\xi}{\xi}}\xi\,,\qquad a\in\R^n\,.
\end{equation}
The set of reflections $r_1 := r_{\al_1}, \, \dots, r_n := r_{\al_n}$, together with the affine reflection $r_0$,  is denoted by $R$,
\begin{equation}\label{R}
R=\{ r_0,r_1,\dots,r_n \}.
\end{equation}

\subsection{Weyl group and affine Weyl group}\

The Weyl group $W$ is generated by $n$ reflections $r_\al$, $\al\in\Delta$. The set $R$ of $n+1$ generators \eqref{R} generates the affine Weyl group $W^{\mathrm{aff}}$. Since the affine Weyl group $W^{\mathrm{aff}}$ is the semidirect product of the Abelian group of translations $T(Q^\vee)$ by shifts from $Q^\vee$ and of the Weyl group~$W$,
\begin{equation*}
 W^{\mathrm{aff}}= T(Q^\vee) \rtimes W = \langle r \mid r\in R \rangle\, ,
\end{equation*}
for any $w^{\mathrm{aff}}\in W^{\mathrm{aff}}$, there exists a unique $w\in W$ and a unique shift $T(q^{\vee})$ such that $w^{\mathrm{aff}}=T(q^{\vee})w$.
Taking any $w^{\mathrm{aff}}=T(q^\vee)w\in {W}^{\mathrm{aff}}$, the retraction homomorphism $\psi:{W}^{\mathrm{aff}}\map W$  
is given by
\begin{align}
\psi(w^{\mathrm{aff}}) &=w, \label{ret}
\end{align}

The fundamental domain $F$ of $W^{\mathrm{aff}}$, which consists of precisely one point of each $W^{\mathrm{aff}}$-orbit, is the convex hull of the points $\left\{ 0, \frac{\om^{\vee}_1}{m_1},\dots,\frac{\om^{\vee}_n}{m_n} \right\}$,
\begin{align}
F &=\setm{\sum_{i\in I} y_i\om^{\vee}_i}{y_j\in \R_{\geq 0},\, \sum_{i\in \hat I}y_j m_j=1  }. \label{deffun}
\end{align}
Let us denote the isotropy subgroup of a point $a\in\R^n$ and its order by
\begin{equation*}
\mathrm{Stab}_{W^{\mathrm{aff}}}(a) = \setb{w^{\mathrm{aff}}\in W^{\mathrm{aff}}}{w^{\mathrm{aff}}a=a},\q h(a)=|\mathrm{Stab}_{W^{\mathrm{aff}}}(a)|,
\end{equation*}
and define a function $\ep:\R^n\map\N$ by the relation
\begin{equation}\label{epR}
\ep(a)=\frac{|W|}{h(a)}.
\end{equation}
The following abbreviation for any $M\in \N$ is used 
\begin{equation}\label{hM}
h_M(a)=\abs{\mathrm{Stab}_{W^{\mathrm{aff}}}\left(\frac{a}{M}\right)}.
\end{equation}
 
Since for any $w^{\mathrm{aff}}\in W^{\mathrm{aff}}$ the stabilizers $\mathrm{Stab}_{W^{\mathrm{aff}}}(a) $ and $\mathrm{Stab}_{W^{\mathrm{aff}}}(w^{\mathrm{aff}}a) $ are  conjugate, one obtains that
\begin{equation}\label{epshift}
\ep(a)=\ep(w^{\mathrm{aff}}a),\q w^{\mathrm{aff}}\in W^{\mathrm{aff}}.
\end{equation}

Recall that the stabilizer $\mathrm{Stab}_{W^{\mathrm{aff}}}(a)$
of a point $a=y_1\om^{\vee}_1+\dots+y_n\om^{\vee}_n\in F$ is trivial, $\mathrm{Stab}_{W^{\mathrm{aff}}}(a)=1$ if the point $a$ is in the interior of $F$, $a\in \mathrm{int}(F)$. Otherwise the group $\mathrm{Stab}_{W^{\mathrm{aff}}}(a)$
is generated by such $r_i$ for which $y_i=0$, $i\in \hat I$.

Considering the standard action of $W$ on the torus $\R^n/Q^{\vee}$, we denote for $x\in \R^n/Q^{\vee}$ the isotropy group by $\mathrm{Stab} (x)$
and the orbit and its order by
\begin{equation*}
W x=\set{wx\in \R^n/Q^{\vee} }{w\in W},\q \wt \ep(x)\equiv |Wx|.
\end{equation*}
Recall the following three properties from Proposition 2.2 in \cite{HP} of the action of $W$ on the torus $\R^n/Q^{\vee}$:
\begin{enumerate}
\item For any $x\in \R^n/Q^{\vee}$, there exists $x'\in F \cap \R^n/Q^{\vee} $ and $w\in W$ such that
\begin{equation}\label{rfun1}
 x=wx'.
\end{equation}
\item If $x,x'\in F \cap \R^n/Q^{\vee} $ and $x'=wx$, $w\in W$, then
\begin{equation}\label{rfun2}
 x'=x=wx.
\end{equation}
\item If $x\in F \cap \R^n/Q^{\vee} $, i.e. $x=a+Q^{\vee}$, $a\in F$, then $\psi (\mathrm{Stab}_{W^{\mathrm{aff}}}(a))=\mathrm{Stab}(x)$ and
\begin{equation}\label{rfunstab}
\mathrm{Stab} (x) \cong \mathrm{Stab}_{W^{\mathrm{aff}}}(a).
\end{equation}
\end{enumerate}
From \eqref{rfunstab} we obtain that for $x=a+Q^{\vee}$, $a\in F$ it holds that
\begin{equation}\label{ept}
	\ep(a)= \wt\ep(x).
\end{equation}
Note that instead of $\wt\ep(x)$, the symbol $\ep(x)$ is used for $|Wx|$, $x\in F\cap\R^n/Q^{\vee} $ in \cite{HP,HMP}. The method of calculation of the coefficients $\ep(x)$ is detailed in \S 3.7 in \cite{HP}.

\vskip0.5cm\section{Sign homomorphisms and orbit functions}

\subsection{Sign homomorphisms}\

To introduce various classes of orbit functions, we consider `sign' homomorphisms $\sigma:W\rightarrow\{\pm1 \}.$ The following two choices of homomorphism values of generators $r_\al,\,\al\in\Delta$,  lead to the well-known homomorphisms:
\begin{align}
\id(r_\al)&=1 \label{ghomid} \\
\sigma^e(r_\al)&=-1 \label{ghome}
\end{align}
which yield for any $w\in W$
\begin{align}
\id(w)&=1  \\
\sigma^e(w)&=\det w. \label{parity}
\end{align}

Using the decomposition \eqref{sl}, two additional homomorphisms are given as follows \cite{HMP}:
\begin{align}
\sigma^s(r_\al)&=\begin{cases} 1,\quad \al\in \Delta_l  \\ -1,\quad \al\in \Delta_s\end{cases}\\
\sigma^l(r_\al)&=\begin{cases} 1,\quad \al\in \Delta_s  \\ -1,\quad \al\in \Delta_l.\end{cases}
\end{align}

\subsection{Fundamental domains}\

Each of the four sign homomorphisms determines a decomposition of the fundamental domain~$F$. The factors of this decomposition are crucial for the study of the discretized orbit functions. For each sign homomorphism $\sigma$ the appropriate subset $F^\sigma\subset F$ is  
\begin{align}\label{Fsigma}
F^\sigma&=\setm{a\in F}{\sigma \circ \psi \left(\mathrm{Stab}_{W^{\mathrm{aff}}}(a)\right)=\{1\} }
\end{align}
where $\psi$ is the retraction homomorphism \eqref{ret}. Since for all points of the interior of $F$ the stabilizer is trivial, i.e. $\mathrm{Stab}_{W^{\mathrm{aff}}}(a)=1$, $a\in \mathrm{int} (F)$, the interior $\mathrm{int} (F)$ is a subset of all $F^\sigma$. 
Let us also define the corresponding subset $R^\sigma$ of generators $R$ of ${W}^{\mathrm{aff}}$
\begin{align}\label{Rsigma}
R^\sigma&=\setb{r\in R}{ \sigma \circ \psi \left( r\right)=-1 }
\end{align}
and subsets of the boundaries $H^\sigma$ of $F$
\begin{align}\label{HRsigma}
H^\sigma&=\set{a\in F}{(\exists r\in R^\sigma)(ra=a)}.
\end{align}
Note  that the sets $F^{\sigma^s}$, $R^{\sigma^s}$ and $H^{\sigma^s}$ correspond to the sets $F^s$, $R^s$ and $H^s$ from \cite{HMP}; similar correspondence holds for the long versions of these sets.
Note also that $R^\id =\emptyset$ and $R^{\sigma^e} =R$ and for nonsimply-laced systems $R= R^{\sigma^s}\cup R^{\sigma^l}$. For nonsimply-laced systems are the subsets of generators $R^{\sigma^s}$ and $R^{\sigma^l}$ summarized in Table \ref{tab}. 
 Moreover, the sets $F^{\sigma}$ and $H^{\sigma}$ correspond to the sets $F^{\sigma}(0)$ and $H^{\sigma}(0)$ from \cite{CzHr}. Thus, specializing Proposition 2.7 from \cite{CzHr}, the following set equality holds
\begin{equation}
F^\sigma=F\setminus H^\sigma.
\end{equation} 
Introducing the symbols $y^\sigma_i$, $i\in \hat I$
\begin{equation*}
y^{\sigma}_i \in \begin{cases}\R_{> 0},\q   r_i\in R^\sigma\\ \R_{\geq 0},\q r_i\in R\setminus R^\sigma,\end{cases}
\end{equation*}
the explicit form of $F^\sigma$ is given by
\begin{equation}\label{FsFlex}
F^\sigma=\setm{\sum_{i\in I}y^{\sigma}_i\om^{\vee}_i}{\sum_{i\in \hat I} y^{\sigma}_im_i=1  }.
\end{equation}

\subsection{Orbit functions}\

Depending on the type of  root system, two or four sign homomorphisms induce the corresponding types of families of complex orbit functions. Within each family are the complex functions $\phi^\sigma_\la:\R^n\map \C$ labeled by weights $\la\in P$,
\begin{equation}\label{genorb}
\phi^\sigma_\la(a)=\sum_{w\in W}\sigma (w)\, e^{2 \pi i \sca{ w\la}{a}},\q a\in \R^n.
\end{equation}
Recall from Proposition 3.1 in \cite{CzHr} that for any $w^{\mathrm{aff}} \in {W}^{\mathrm{aff}}$ and $a\in \R^n$ it holds that  
\begin{equation}\label{Sssym}
\phi^\sigma_{\la}(w^{\mathrm{aff}}a)=\sigma \circ \psi(w^{\mathrm{aff}})\phi^\sigma_{\la}(a)
\end{equation}
and that the functions $\phi^\sigma_{\la}$ are all zero on the boundary $H^\sigma$,
\begin{equation}\label{Fss}
\phi^\sigma_{\la}(a')=0,\q  a'\in H^\sigma
\end{equation}
and therefore the functions $\phi^\sigma_{\la}$ are considered on the domain $F^\sigma$ only.

\begin{tvr}\label{disklab} 
Let $a\in \frac{1}{M}P$ with $M\in \N$. Then for any $w^{\mathrm{aff}} \in {W}^{\mathrm{aff}}$ and $\la\in P$ it holds that 
\begin{equation}\label{dSssym}
\phi^\sigma_{Mw^{\mathrm{aff}}(\frac{\la}{M})}(a)=\sigma \circ \psi(w^{\mathrm{aff}})\phi^\sigma_{\la}(a)
\end{equation}
and the functions $\phi^\sigma_{\la}$ are identically zero on the boundary $MH^\sigma$, i.e.
\begin{equation}\label{dFss}
\phi^\sigma_{\la}\equiv 0,\q  \la\in MH^{\sigma}.
\end{equation}
\end{tvr}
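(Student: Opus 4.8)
The plan is to reduce both assertions to the two already-recalled properties of the continuous orbit functions, namely the affine-Weyl transformation law \eqref{Sssym} and the boundary vanishing \eqref{Fss}, by first isolating an elementary symmetry of $\phi^\sigma_\la(a)$ under interchange of the label $\la$ and the argument $a$. Straight from the defining series \eqref{genorb}, using $\sigma(w^{-1})=\sigma(w)$ (since $\sigma$ takes values in $\{\pm1\}$) and the orthogonality of the $W$-action, the substitution $w\mapsto w^{-1}$ gives
\begin{equation*}
\phi^\sigma_{\mu}(b)=\sum_{w\in W}\sigma(w)\,e^{2\pi i\sca{w\mu}{b}}=\sum_{w\in W}\sigma(w)\,e^{2\pi i\sca{\mu}{wb}}=\phi^\sigma_{b}(\mu),\q \mu,b\in\R^n ,
\end{equation*}
while the homogeneity $\phi^\sigma_{t\mu}(b)=\phi^\sigma_{\mu}(tb)$, $t\in\R$, is immediate from the linearity of $w$ and of $\sca{\,}{\,}$. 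Combining these two facts yields the single identity I shall use, valid for all $\mu\in\R^n$,
\begin{equation*}
\phi^\sigma_{M\mu}(a)=\phi^\sigma_{\mu}(Ma)=\phi^\sigma_{Ma}(\mu) .
\end{equation*}
The sole purpose of the hypothesis $a\in\frac{1}{M}P$ is that it places $Ma$ in the weight lattice $P$, so that \eqref{Sssym} and \eqref{Fss}, whose labels must lie in $P$, become applicable after label and argument have been swapped; accordingly \eqref{dFss} is to be read as the statement $\phi^\sigma_\la(a)=0$ for every such $a$.

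For \eqref{dSssym} I would set $\mu=w^{\mathrm{aff}}(\la/M)$ in the identity above, so that $\phi^\sigma_{Mw^{\mathrm{aff}}(\la/M)}(a)=\phi^\sigma_{Ma}\big(w^{\mathrm{aff}}(\la/M)\big)$; since $Ma\in P$, the transformation law \eqref{Sssym} applies to the right-hand side with argument $\la/M$ and turns it into $\sigma\circ\psi(w^{\mathrm{aff}})\,\phi^\sigma_{Ma}(\la/M)$, and running the same identity backwards gives $\phi^\sigma_{Ma}(\la/M)=\phi^\sigma_{\la/M}(Ma)=\phi^\sigma_{M(\la/M)}(a)=\phi^\sigma_{\la}(a)$, which is \eqref{dSssym}. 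For \eqref{dFss} I would instead write $\la=Mh$ with $h\in H^\sigma\subset F$ and take $\mu=h$, so that $\phi^\sigma_\la(a)=\phi^\sigma_{Mh}(a)=\phi^\sigma_{Ma}(h)$; since $Ma\in P$ and $h\in H^\sigma$, the vanishing \eqref{Fss} forces $\phi^\sigma_{Ma}(h)=0$.

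I do not anticipate a genuine obstacle; the argument is bookkeeping once the label--argument symmetry is in hand. The one point that requires care is that $Mw^{\mathrm{aff}}(\la/M)$ need not itself lie in $P$ --- writing $w^{\mathrm{aff}}=T(q^\vee)w$ one has $Mw^{\mathrm{aff}}(\la/M)=w\la+Mq^\vee$ --- so \eqref{Sssym} cannot be invoked for it directly, and the affine-Weyl action must first be transported from the label onto the argument. If one prefers to avoid the symmetry step, \eqref{dSssym} can also be obtained by a direct computation from this same decomposition: expand $\phi^\sigma_{w\la+Mq^\vee}(a)$ via \eqref{genorb}, note that each factor $e^{2\pi i M\sca{w'q^\vee}{a}}=e^{2\pi i\sca{w'q^\vee}{Ma}}$ equals $1$ because $w'q^\vee\in Q^\vee$ and $Ma\in P$, and then reindex by $w'\mapsto w'w^{-1}$ using $\sigma(w'w^{-1})=\sigma(w')\sigma(w)$ together with $\psi(w^{\mathrm{aff}})=w$ from \eqref{ret}; this variant makes it transparent that $\frac{1}{M}P$, rather than the coarser $\frac{1}{M}P^\vee$, is the appropriate grid here.
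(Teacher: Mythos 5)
Your proof is correct, but it takes a genuinely different route from the paper's. The paper proves \eqref{dSssym} by a direct computation: it writes $w^{\mathrm{aff}}=T(q^\vee)w'$, expands $\phi^\sigma_{w'\la+Mq^\vee}(a)$ via \eqref{genorb}, kills the factors $e^{2\pi i\sca{q^\vee}{w\mu}}$ (with $\mu=Ma\in P$) by the $\Z$-duality of $P$ and $Q^\vee$, and reindexes the Weyl sum; it then deduces \eqref{dFss} from \eqref{dSssym} by specializing $w^{\mathrm{aff}}$ to a generator $r\in R^\sigma$ fixing $\la/M$, obtaining $\phi^\sigma_\la(a)=-\phi^\sigma_\la(a)$. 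You instead isolate the label--argument symmetry $\phi^\sigma_{M\mu}(a)=\phi^\sigma_{Ma}(\mu)$ (valid by orthogonality of $W$ and homogeneity of the exponents) and use it to transport the affine action from the label to the argument, whence both claims reduce directly to the already-recalled continuous properties \eqref{Sssym} and \eqref{Fss} applied to $\phi^\sigma_{Ma}$ with $Ma\in P$; in particular you get \eqref{dFss} straight from \eqref{Fss} rather than via \eqref{dSssym}. Your reduction is arguably the more conceptual one --- it makes explicit that the discrete label symmetry is literally the argument symmetry seen through the $\la\leftrightarrow a$ duality, which is the theme of the whole paper --- while the paper's computation is self-contained and pinpoints exactly where $\Z$-duality enters; your closing ``variant'' is essentially the paper's proof. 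One small correction to a side remark: since $d=|P/Q^\vee|$ presupposes $Q^\vee\subseteq P$, the label $Mw^{\mathrm{aff}}(\la/M)=w\la+Mq^\vee$ does in fact lie in $P$; the genuine reason \eqref{Sssym} cannot be invoked directly is that it governs the affine action on the \emph{argument}, not on the \emph{label}. This does not affect the validity of your argument.
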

\begin{proof}
Considering an element of the affine Weyl group of the form $w^{\mathrm{aff}}a=w'a+q^\vee$, with $q^\vee\in Q^\vee$, $a=\mu/M$ and $\mu\in P$, the property \eqref{dSssym} is derived, 
\begin{align*}
\phi^\sigma_{Mw^{\mathrm{aff}}(\frac{\la}{M})}(a)&=\sum_{w\in W}\sigma (w)\, e^{2 \pi i \sca{ w'\la+M q^\vee}{wa}}=\sum_{w\in W}\sigma (w)\, e^{2 \pi i \sca{ w'\la}{wa}}  e^{2 \pi i \sca{  q^\vee}{w\mu}}\\
   &= \sigma(w')\phi^\sigma_{\la}(a),
\end{align*}
where the third equality follows from $W$-invariance of $P$ and the $\Z$-duality of $P$ and $Q^\vee$ which implies $\sca{  q^\vee}{w\mu}\in \Z$. Specialization of property \eqref{dSssym} for the generators $r\in R^{\sigma}$ in \eqref{HRsigma}  implies for the weights $\la\in MH^{\sigma}$ that 
\begin{align*}
\phi^\sigma_{\la}(a)&=\phi^\sigma_{Mr(\frac{\la}{M})}(a)=-\phi^\sigma_{\la}(a).
\end{align*}
\end{proof}

Suppose we have $M\in \N$ and $a\in \frac{1}{M}P$. It follows from \eqref{Sssym} and \eqref{Fss} that the discretized functions $\phi^\sigma_b$ can be considered only on the set
\begin{equation}\label{Fs}
F^\sigma_{P,M}:=\frac{1}{M}P \cap F^\sigma.
\end{equation}
Defining the set
\begin{equation}\label{Ls}
\Lambda^\sigma_{P,M}:= P \cap MF^{\sigma},
\end{equation}
Proposition \ref{disklab} implies that the functions $\phi^\sigma_\la$ on the finite set $F^\sigma_{P,M}$ can be parameterized by $\la\in\Lambda^\sigma_{P,M}$ only.

\vskip0.5cm\section{Discretization of orbit functions}

\subsection{Number of elements of $F^\sigma_{P,M}$}\

In order to derive an explicit form of the sets $F^\sigma_{P,M}$, the points of $F^\sigma$ from \eqref{FsFlex} are rewritten in the $\om$-basis (the basis of fundamental weights $\{ \omega_i, i\in I \}$) via the relation $\om^\vee_i=2\om_i/\sca{\al_i}{\al_i}$ and substitution $u^\sigma_i = 2My^\sigma_i/\sca{\al_i}{\al_i}, \, i\in \hat I$ is used. Then taking into account the relation between the marks and the comarks $q_i = m_i \sca{\al_i}{\al_i} /2 ,\, i\in I $, one obtains 
\begin{equation}\label{FP}
F^\sigma_{P,M}=\setm{\sum_{i\in I}u^{\sigma}_i\frac{\om_i}{M}}{\sum_{i\in \hat I} u^{\sigma}_iq_i=M  },
\end{equation}
with
\begin{equation*}
u^{\sigma}_i \in \begin{cases}\N,\q   r_i\in R^\sigma\\ \Z_{\geq 0},\q r_i\in R\setminus R^\sigma.\end{cases}
\end{equation*} 
From definitions \eqref{Fs} and \eqref{Ls} follows that it holds $M F^\sigma_{P,M}= \Lambda^\sigma_{P,M}$ and thus
\begin{equation}\label{LP}
\Lambda^\sigma_{P,M}=\setm{\sum_{i\in I}u^{\sigma}_i\om_i}{\sum_{i\in \hat I} u^{\sigma}_iq_i=M  },
\end{equation} 
and 
\begin{equation}\label{complete}
|F^\sigma_{P,M}|=|\Lambda^\sigma_{P,M}|.
\end{equation}

Note that the number of points of the sets 
$$F^\sigma_{P^\vee,M}= \frac{1}{M}P^\vee \cap F^\sigma$$ 
is calculated for all cases in \cite{HP,HMP}. Since for simply-laced root systems with roots of only one length, $A_{n\geq 1}$, $D_{n\geq 4}$, $E_6$, $E_7$, $E_8$ it holds that $P=P^\vee$ and thus $F^\sigma_{P^\vee,M}=F^\sigma_{P,M}$, the formulas for $|F^\sigma_{P^\vee,M}|$, $\sigma = \id, \sigma^e$ in \cite{HP} determine also the numbers $|F^\sigma_{P,M}|$. The formulas for the non-simply laced systems are derived in the following theorem.  
\begin{thm}\label{numAn}
The numbers of points of grids $F^\id_{P,M}$ of Lie algebras $B_n$, $C_n$, $G_2$ and $F_4$ are given by the following relations.
\begin{enumerate}\item  $B_n,\,n\geq 3$,
\begin{eqnarray*}
|F^\id_{P,2k}(B_n)|=& \begin{pmatrix}n+k \\ n \end{pmatrix}+3\begin{pmatrix}n+k-1 \\ n \end{pmatrix}\\
|F^\id_{P,2k+1}(B_n)|=& 3 \begin{pmatrix}n+k \\ n \end{pmatrix}+\begin{pmatrix}n+k-1 \\ n \end{pmatrix}
\end{eqnarray*}
\item $C_n,\,n\geq 2$,
$$|F^\id_{P,M}(C_n)|=\begin{pmatrix}n+M\\ n \end{pmatrix}$$
\item $G_2$,
\begin{equation*}
\begin{alignedat}{4}
|F^\id_{P,2k}(G_2)|&= k^2+2k+1, &\qquad |F^\id_{P,2k+1}(G_2)|&= k^2+3k+2
\end{alignedat}
\end{equation*}
\item $F_4$,
\begin{equation*}
\begin{alignedat}{4}
|F^\id_{P,6k}(F_4)|&= \frac{9}{2}k^4+\frac{27}{2}k^3+14k^2+6k+1, &\qquad |F^\id_{P,6k+1}(F_4)|&= \frac92k^4+\frac{33}{2}k^3+\frac{43}{2}k^2+\frac{23}{2}k+2\\
|F^\id_{P,6k+2}(F_4)|&= \frac{9}{2} k^4+\frac{39}{2}k^3+\frac{61}{2}k^2+\frac{41}{2}k+5, &\qquad |F^\id_{P,6k+3}(F_4)|&=\frac{9}{2}k^4+\frac{45}{2}k^3+41k^2+32k+9\\
|F^\id_{P,6k+4}(F_4)|&=\frac{9}{2}k^4+\frac{51}{2}k^3+53k^2+48k+16,  &\qquad |F^\id_{P,6k+5}(F_4)|&=\frac{9}{2}k^4+\frac{57}{2}k^3+\frac{133}{2}k^2+\frac{135}{2}k+25.
\end{alignedat}
\end{equation*}
\end{enumerate}
\end{thm}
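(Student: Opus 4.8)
The plan is to count lattice points in the simplices defined by \eqref{FP} with $\sigma=\id$, i.e.\ with all $u_i\in\Z_{\geq 0}$ and the single linear constraint $\sum_{i\in\hat I}u_iq_i=M$, where $q_0=1$. Using $u_0$ to absorb slack, this is the same as counting $(u_1,\dots,u_n)\in\Z_{\geq 0}^n$ with $\sum_{i\in I}u_iq_i\leq M$. The comarks for each type are read off from Table~\ref{tab}: $C_n$ has all $q_i=1$; $B_n$ has $q=(1,2,\dots,2,1)$; $G_2$ has $q=(2,1)$; $F_4$ has $q=(2,3,2,1)$. So the whole theorem reduces to four elementary lattice-point counts, which I would organize by the periodicity of $M$ modulo the lcm of the comarks (hence $1,2,2,6$ respectively — matching the case splits in the statement).

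First I would dispatch $C_n$: with all comarks $1$, $|F^\id_{P,M}(C_n)|$ counts $(u_1,\dots,u_n)\in\Z_{\geq 0}^n$ with $\sum u_i\leq M$, which is the classic stars-and-bars count $\binom{n+M}{n}$. Next, for $B_n$ I would isolate the two unit comarks (at positions $1$ and $n$) and the $n-2$ comarks equal to $2$: writing $M=2k$ or $M=2k+1$, fix the parities forced on $u_1+u_n$ and then sum a generating function $\sum_{j\geq 0}(j+1)x^j\cdot(\text{something})$, or more cleanly recognize the count as a sum of two hypergeometric-type terms in $k$; the target closed forms $\binom{n+k}{n}+3\binom{n+k-1}{n}$ and $3\binom{n+k}{n}+\binom{n+k-1}{n}$ strongly suggest splitting by whether the two unit-comark coordinates contribute an even or odd amount and matching coefficients. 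For $G_2$ the count is $\#\{(u_1,u_2)\in\Z_{\geq 0}^2:2u_1+u_2\leq M\}=\sum_{u_1\geq 0,\ 2u_1\leq M}(M-2u_1+1)$, an elementary arithmetic sum giving a quadratic quasi-polynomial in $M$ with period $2$; I would just evaluate it at $M=2k$ and $M=2k+1$ and check it equals $k^2+2k+1=(k+1)^2$ and $k^2+3k+2=(k+1)(k+2)$.

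Finally $F_4$: count $(u_1,u_2,u_3,u_4)\in\Z_{\geq 0}^4$ with $2u_1+3u_2+2u_3+u_4\leq M$. Summing out $u_4$ first replaces the inequality by a weight, leaving $\sum (M-2u_1-3u_2-2u_3+1)$ over $(u_1,u_2,u_3)$ with $2u_1+3u_2+2u_3\leq M$; I would then sum out $u_2$ (the weight-$3$ variable, responsible for the period-$6$ behaviour) and reduce to a two-variable count with weights $(2,2)$, which is again elementary. Equivalently one writes the Hilbert/Ehrhart series $\bigl[(1-x)(1-x^2)^2(1-x^3)\bigr]^{-1}$ and extracts the coefficient of $x^M$; its denominator has the factor structure forcing a degree-$4$ quasi-polynomial with period $\mathrm{lcm}(1,2,2,3)=6$, so six residue classes, exactly the six formulas listed. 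The main obstacle is purely bookkeeping: organizing the $F_4$ computation so that the six residue-class polynomials come out correctly, and checking a couple of small values of $M$ (say $M=1,2,3$) against the formulas to catch sign or off-by-one errors. No conceptual difficulty arises beyond careful summation; \eqref{FP} already hands us the precise description of the grids.
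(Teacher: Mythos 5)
Your reduction is the same as the paper's: both proofs translate \eqref{FP} with $\sigma=\id$ into counting non-negative integer solutions of $u_0+q_1u_1+\dots+q_nu_n=M$ with the comarks read from Table~\ref{tab}. Where you diverge is in how that count is then evaluated. The paper simply invokes the algorithm of Proposition~3.2 in \cite{HP}, which packages the answer into the matrices $R^P$ whose rows give the quasi-polynomial on each residue class of $M$ modulo $L=\mathrm{lcm}(q_1,\dots,q_n)$; the proof then just reports the computed $R^P$ for each algebra. You instead carry out the count by elementary summation and generating functions, which is self-contained and, for $B_n$, yields the closed form almost immediately: the series is $\bigl[(1-x)^3(1-x^2)^{n-2}\bigr]^{-1}=(1+x)^3(1-x^2)^{-(n+1)}$, and reading off the coefficients of $x^{2k}$ and $x^{2k+1}$ gives exactly $\binom{n+k}{n}+3\binom{n+k-1}{n}$ and $3\binom{n+k}{n}+\binom{n+k-1}{n}$ --- a cleaner derivation than the vaguer parity-splitting you sketch. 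What the paper's route buys is uniformity (one cited algorithm covers all cases, including the bulky $F_4$ table); what yours buys is a verifiable, from-scratch argument.

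One slip to correct: for $F_4$ the generating function you write, $\bigl[(1-x)(1-x^2)^2(1-x^3)\bigr]^{-1}$, is missing a factor of $(1-x)^{-1}$ (you dropped either $u_0$ or the conversion of the equality to an inequality, but not both may be dropped). The correct series is $\bigl[(1-x)^2(1-x^2)^2(1-x^3)\bigr]^{-1}$; with only four denominator factors the coefficient would be a degree-$3$ quasi-polynomial, contradicting your own (correct) claim of degree $4$ and the stated formulas. Your primary route --- summing out $u_4$, then $u_2$, then the two weight-$2$ variables --- is unaffected and does give the six listed polynomials, as one can confirm at $M=0,\dots,5$ where the counts $1,2,5,9,16,25$ match.
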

\begin{proof}
The algorithm for calculation of the counting polynomials of solutions of the equation $u^\id_0+q_1u^\id_1+\dots+ q_nu^\id_n= M$ with $u^\id_0,u^\id_1,\dots,u^\id_n\in\Z_{\geq 0} $ and $q_1,\dots, q_n\in \N$ is formulated in Proposition 3.2 in \cite{HP}. This algorithm describes for given $q_1,\dots,q_n$ the construction of $L\times N$ matrices $R^P$, with $L=\mathrm{lcm} (q_1,\dots,q_n)$ and $(n+1)L-( q_0+q_1+\dots +q_n)=LN+N',\, N,N'\in \Z^{\geq 0},\, N'<L$,   from which the counting polynomials are constructed using the formula 
\begin{equation*}
 |F^\id_{P,Lk+l}|=\sum_{i=0}^{N}R^P_{li}\comb{n-i+k}{n}.
\end{equation*}
Using the algorithm, the following $R^P$ matrices are obtained,
$R^P(B_n)=\left(\begin{smallmatrix}1&3\\\noalign{\medskip}3&1\end{smallmatrix}\right) $, $R^P(C_n)=(1)$, $R^P(G_2)=\left(\begin{smallmatrix}1&1\\\noalign{\medskip}2&0\end{smallmatrix}\right) $ and  $R^P(F_4)=\left(\begin{smallmatrix}1&34&64&9\\\noalign{\medskip}2&46&55&5\\ \noalign{\medskip}5&55&46&2\\ \noalign{\medskip}9&64&34&1\\ \noalign{\medskip}16&67&25&0\\ \noalign{\medskip}25&67&16&0\end{smallmatrix}\right) $.

\end{proof}

Each subset \eqref{Rsigma}  of the set of generators $R^\sigma \subset R$ determines also the corresponding decomposition of the sum of comarks
\begin{equation}\label{qsigma}
q^\sigma = \sum_{r_i \in R^\sigma }q_i  
\end{equation}  
Note that $q^{\sigma^e}=g $, $q^{\id}=0 $ and for nonsimply-laced systems $g=q^{\sigma^s}+q^{\sigma^l} $. The numbers $q^{\sigma^s}$ and $q^{\sigma^l} $ are for nonsimply-laced systems  tabulated in Table \ref{tab}. The number $q^\sigma$ determines the numbers of elements of $F^{\sigma^e}_{P,M}$, $F^{\sigma^s}_{P,M}$ and $F^{\sigma^l}_{P,M}$ from the counting formulas for $F^\id_{P,M}$.
\begin{tvr}\label{Cox}
For any sign homomorphism $\sigma$ and any $M\in\N$ it holds that 
\begin{equation}\label{numFsFl}
| F^{\sigma}_{P,M}|=\begin{cases}0 & M<q^\sigma \\ 1 & M=q^\sigma \\ |F^\id_{P,M-q^\sigma}|, & M>q^\sigma \end{cases}
\end{equation}
\end{tvr}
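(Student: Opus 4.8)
The plan is to reduce $|F^\sigma_{P,M}|$ to counting tuples of nonnegative integers that solve a single linear equation, and then to strip off the strict-positivity constraints by a shift of coordinates. I would start from the explicit description \eqref{FP}: a point of $F^\sigma_{P,M}$ has the form $\sum_{i\in I}u^\sigma_i\,\om_i/M$, where the tuple $(u^\sigma_0,u^\sigma_1,\dots,u^\sigma_n)$ runs over all nonnegative integer solutions of $\sum_{i\in\hat I}u^\sigma_iq_i=M$ subject to $u^\sigma_i\geq 1$ whenever $r_i\in R^\sigma$. Since $\om_1,\dots,\om_n$ form a basis of $\R^n$, the point determines $u^\sigma_1,\dots,u^\sigma_n$, and then the linear constraint determines $u^\sigma_0$; hence $|F^\sigma_{P,M}|$ equals the number of admissible tuples.

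Next I would apply the substitution $u^\sigma_i=v_i+1$ for every $i$ with $r_i\in R^\sigma$, and $u^\sigma_i=v_i$ otherwise. This is a bijection between admissible tuples $(u^\sigma_i)$ and arbitrary tuples $(v_i)$ of nonnegative integers, and it transforms the defining equation into
\[
\sum_{i\in\hat I}v_iq_i \;=\; M-\sum_{r_i\in R^\sigma}q_i \;=\; M-q^\sigma,
\]
by the definition \eqref{qsigma} of $q^\sigma$. Thus $|F^\sigma_{P,M}|$ equals the number of nonnegative-integer solutions of $\sum_{i\in\hat I}v_iq_i=M-q^\sigma$.

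The three cases then follow at once. If $M<q^\sigma$ the right-hand side is negative while all comarks $q_i$ are positive, so no solution exists and $|F^\sigma_{P,M}|=0$. If $M=q^\sigma$ the unique solution is $v_0=\dots=v_n=0$, giving $|F^\sigma_{P,M}|=1$. If $M>q^\sigma$ I would compare with the $\sigma=\id$ instance of \eqref{FP}: since $R^\id=\emptyset$, the admissible tuples counted by $|F^\id_{P,M-q^\sigma}|$ are precisely the nonnegative-integer solutions of $\sum_{i\in\hat I}v_iq_i=M-q^\sigma$, whence $|F^\sigma_{P,M}|=|F^\id_{P,M-q^\sigma}|$. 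I do not anticipate a genuine obstacle here; the only point needing a little care is the bijection between grid points and solution tuples, where one must note that $u^\sigma_0$ does not appear among the coordinates of the grid point but is pinned down by the linear constraint, and that strict positivity of the comarks is what makes the first two cases immediate.
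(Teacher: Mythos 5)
Your proposal is correct and follows essentially the same route as the paper: the shift $u^\sigma_i\mapsto u^\sigma_i-1$ for $r_i\in R^\sigma$ reducing the constraint to $\sum_{i\in\hat I}u^\id_iq_i=M-q^\sigma$, followed by the three-case analysis. Your extra remark that the grid point determines the tuple (with $u^\sigma_0$ pinned down by the linear constraint) is a small point of care the paper leaves implicit, but the argument is the same.
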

\begin{proof}
Taking non-negative numbers $u^\id_i\in \Z^{\geq 0}$ and substituting the relations $ u^\sigma_i=1+u^\id_i$ if $r_i\in R^\sigma$ and $ u^\sigma_i=u^\id_i$ if $r_i\in R\setminus R^\sigma $ into the defining relation (\ref{FP}), one gets 
\begin{equation*}
u^\id_0+q_1u^\id_1+\dots + q_n u^\id_n= M-q^\sigma,\q u^\id_0,\dots,u^\id_n\in \Z^{\geq 0}.
\end{equation*}
This equation has one solution $[0,\dots,0]$ if $M=q^\sigma$, no solution if $M<q^\sigma$, and is equal to the defining relation (\ref{FP}) of $F^\id_{M-q^\sigma}$ if $M>q^\sigma$. 
\end{proof}
Note that Proposition \ref{Cox} implies that $$| F^{\sigma}_{P,M+q^\sigma}|=|F^\id_{P,M}|. $$


\begin{example}\label{ex1}
For the Lie algebra $C_2$, it holds that $d=4$, $ q^{\sigma^e}=3$, $ q^{\sigma^s}=1$ and $ q^{\sigma^l}=2$. For $M=3$, the order of the group $\frac{1}{3}P/Q^{\vee}$ is equal to $36$, and according to Theorem \ref{numAn}, Proposition \ref{Cox} and \eqref{complete} one obtains 
\begin{equation*}
\begin{alignedat}{4}
|F^\id_{P,3}(C_2)|&= |\Lambda^\id_{P,3}(C_2)|=\comb{5}{2}=10, &\qquad |F^{\sigma^e}_{P,3}(C_2)|&=|\Lambda^{\sigma^e}_{P,3}(C_2)|= 1,\\ 
|F^{\sigma^s}_{P,3}(C_2)|&=|\Lambda^{\sigma^s}_{P,3}(C_2)|= \comb{4}{2}=6, &\qquad |F^{\sigma^l}_{P,3}(C_2)|&=|\Lambda^{\sigma^l}_{P,3}(C_2)|= \comb{3}{2}=3. 
\end{alignedat}
\end{equation*}
The coset representatives of $\frac{1}{3}P/Q^{\vee}$, the fundamental domains $F^{\sigma}$ and the grids $F^\sigma_{P,3}$ are depicted in Figure~\ref{figC2}.
\begin{figure}
\includegraphics{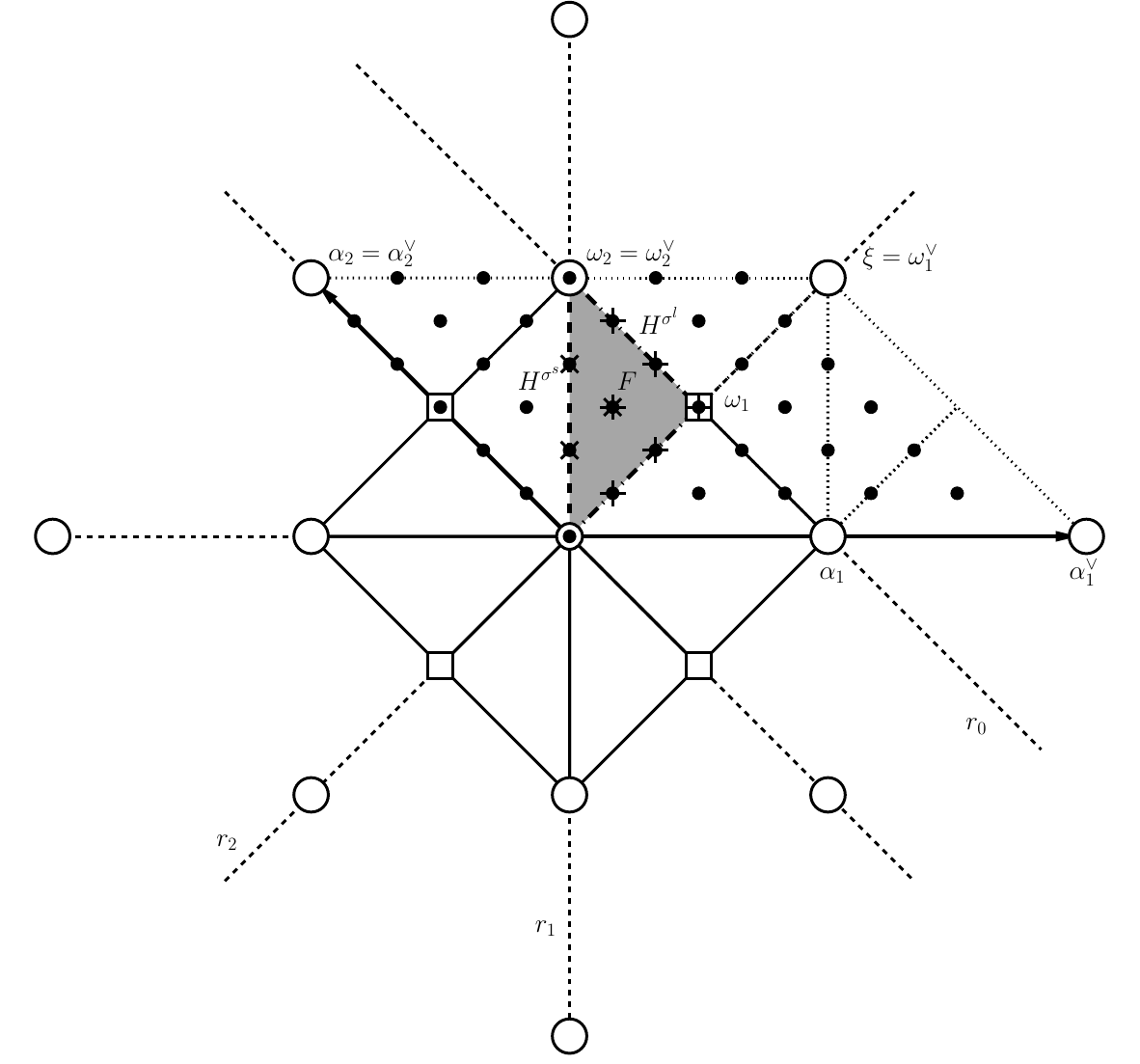}
\caption{ \textit{The fundamental domains $F^\sigma$ and grids $F^\sigma_{P,3}$ of $C_2$.} The fundamental domain $F^\id=F$ is depicted as the grey triangle containing both borders $H^{\sigma^s}$ and $H^{\sigma^l}$, depicted as the thick dashed line and dot-and-dashed lines, respectively. The coset representatives of $\frac{1}{3}P/Q^{\vee}$ are shown as $36$ black dots. The six dots belonging to $F^{\sigma^s}_{P,3}$ and three dots belonging to $F^{\sigma^l}_{P,3}$ are crossed with '$+$' and '$\times$', respectively. The dot crossed with both '$+$' and '$\times$' represents the only point of $F^{\sigma^e}_{P,3}$. The dashed lines represent 'mirrors' $r_0,r_1$ and $r_2$. Circles are elements of the root lattice $Q$; together with the squares they are elements of the weight lattice $P$.}\label{figC2}
\end{figure}
The representatives of coset $P/3Q^\vee$ together with the grids of weights $\Lambda^\sigma_{P,3}$ are depicted in Figure~\ref{figC2d}.
\begin{figure}
\includegraphics{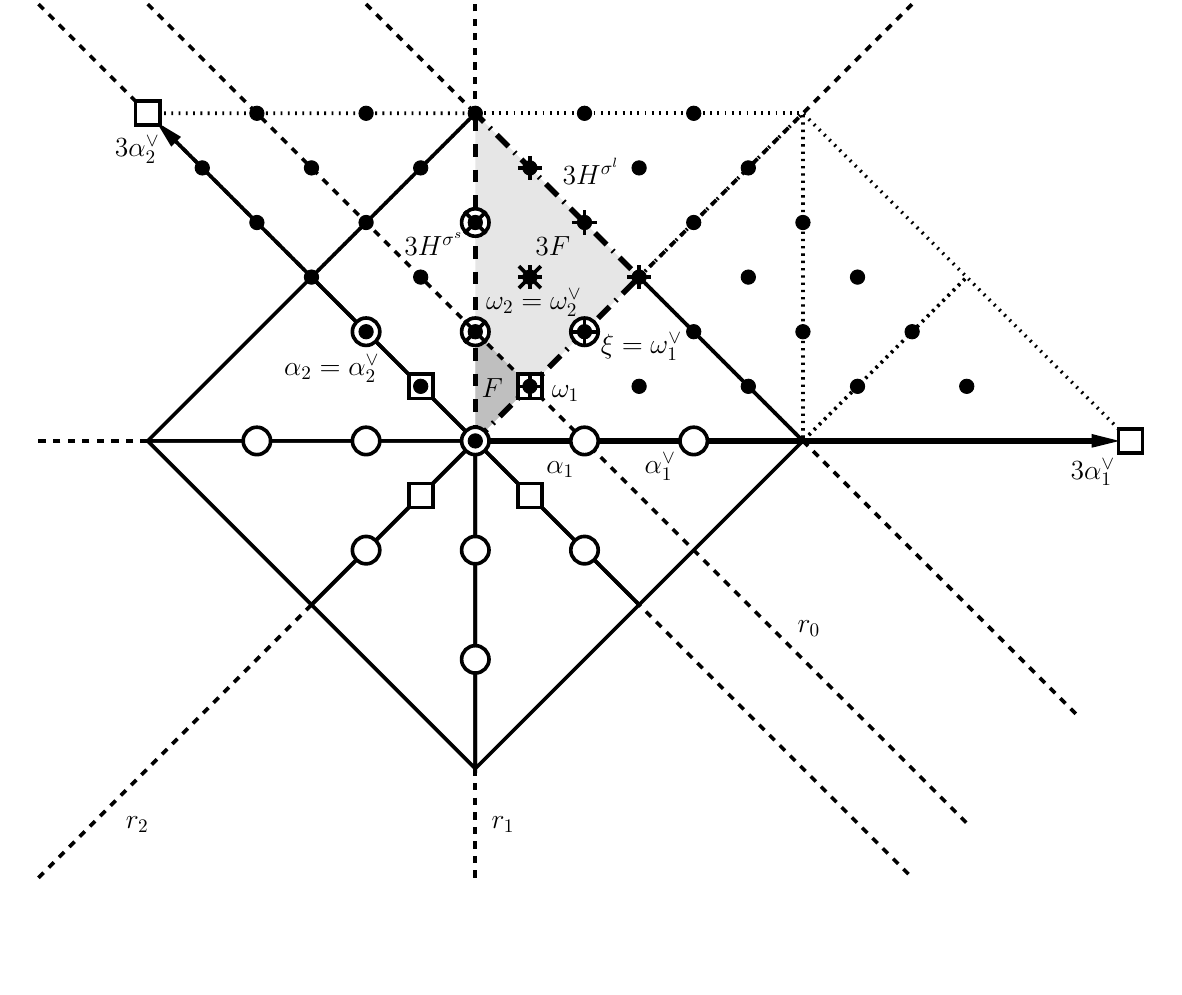}
\caption{ The grids of weights $\Lambda^\sigma_{P,3}$ of $C_2$. The darker grey triangle is the fundamental domain $F$ and the lighter grey triangle is the magnified domain $3F$. The magnified borders $3H^{\sigma^s}$ and $3H^{\sigma^l}$ are depicted as the thick dashed lines and dot-and-dashed lines, respectively. The representatives of coset $P/3Q^\vee$ of $C_2$ are shown as $36$ black dots. The six dots belonging to $\Lambda^{\sigma^s}_{P,3}$ and three dots belonging to $\Lambda^{\sigma^l}_{P,3}$ are crossed with '$+$' and '$\times$', respectively. The dot crossed with both '$+$' and '$\times$' represents the only point of $\Lambda^{\sigma^e}_{P,3}$. The circles, squares and mirrors coincide with those in Figure \ref{figC2}. }\label{figC2d}
\end{figure}
\end{example}


\subsection{Discrete orthogonality of orbit functions}\ 

To describe the discrete orthogonality of all four types of orbit functions on the sets $F^\sigma_{P,M}$, the ideas developed in \cite{MP2,HP} need to be modified. Note that since for $a\in\frac{1}{M}P$ and $\mu\in P $ the relation
$$e^{2\pi\i\sca{\mu}{a+Q^{\vee}}}= e^{2\pi\i\sca{\mu}{a}}$$
holds, the exponential mapping $e^{2\pi\i\sca{\mu}{x}}\in\Com$ for $\mu \in P$ and $x\in\frac{1}{M}P/Q^{\vee}$ is well-defined. 
\begin{tvr}
For all $\mu \in P$, $\mu\notin MQ^\vee$ there exists $x\in\frac{1}{M}P/Q^{\vee}$ such that $e^{2\pi\i\sca{\mu}{x}}\neq 1$.
\end{tvr}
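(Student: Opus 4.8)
The plan is to prove the contrapositive: assuming $e^{2\pi\i\sca{\mu}{x}}=1$ for \emph{every} $x\in\frac1M P/Q^{\vee}$, I will deduce that $\mu\in MQ^{\vee}$. Since, as observed just before the statement, the exponential $e^{2\pi\i\sca{\mu}{x}}$ is well-defined on $\frac1M P/Q^{\vee}$, it is enough to test the hypothesis on the $n$ particular cosets $x_j:=\frac1M\om_j+Q^{\vee}$, $j\in I$; these do lie in $\frac1M P/Q^{\vee}$ because every fundamental weight $\om_j$ belongs to $P=\Z\om_1+\dots+\Z\om_n$. For such $x_j$ one has $e^{2\pi\i\sca{\mu}{x_j}}=e^{2\pi\i\sca{\mu}{\om_j}/M}$, so the assumption forces $\sca{\mu}{\om_j}\in M\Z$ for all $j\in I$.

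The next step recovers $\mu$ from these congruences. Expanding $\mu=\sum_{i\in I}a_i\al_i^{\vee}$ with real coefficients $a_i$ --- permissible since $\{\al_1^{\vee},\dots,\al_n^{\vee}\}$ is a basis of $\R^n$ --- the duality relation \eqref{alvdom}, $\sca{\al_i^{\vee}}{\om_j}=\delta_{ij}$, gives $\sca{\mu}{\om_j}=a_j$. Hence each $a_j$ belongs to $M\Z$, and therefore $\mu=\sum_{j\in I}a_j\al_j^{\vee}\in M\bigl(\Z\al_1^{\vee}+\dots+\Z\al_n^{\vee}\bigr)=MQ^{\vee}$. This is exactly the conclusion of the contrapositive, and negating it back establishes the proposition.

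The argument is short and I do not anticipate a genuine obstacle; the one point requiring care is that $\mu\in P$ alone does \emph{not} guarantee $\sca{\mu}{\om_j}\in\Z$ --- membership in $P$ only controls pairings with $Q^{\vee}$ --- so one cannot bypass the expansion in the coroot basis and must really use \eqref{alvdom} to convert the congruences $\sca{\mu}{\om_j}\in M\Z$ into integrality (indeed divisibility by $M$) of the coroot coordinates of $\mu$. In essence the statement is the biduality of the dual pair of lattices $(P,Q^{\vee})$, refined by the dilation factor $M$.
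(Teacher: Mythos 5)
Your proof is correct and takes essentially the same route as the paper's: both argue by contraposition (the paper phrases it as a contradiction) that integrality of $\sca{\mu}{\nu}/M$ for all $\nu\in P$ forces $\mu\in MQ^\vee$ via the $\Z$-duality of $P$ and $Q^\vee$. The only difference is cosmetic --- you test the hypothesis on the basis cosets $\frac{1}{M}\om_j+Q^\vee$ and spell out the biduality step explicitly through the dual bases $\{\om_j\}$ and $\{\al_i^\vee\}$ using \eqref{alvdom}, where the paper simply invokes $\Z$-duality to conclude $\mu/M\in Q^\vee$.
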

\begin{proof}
Suppose there is some $\mu\in P$ and $\mu\notin MQ^\vee$ such that for all $\nu\in P$
\begin{equation}
 \sca{\mu}{\frac{\nu}{M}}=\sca{\frac{\mu}{M}}{\nu}\in \Z .
\end{equation}
Then from $\Z$-duality of $P$ and $Q^\vee$ follows that $\mu/M\in Q^\vee$, i.e. $\mu\in MQ^\vee$ -- a contradiction.
\end{proof}

\begin{cor}
For any $\mu \in P$ it holds that 
\begin{equation}\label{bdis}
 \sum_{y\in\frac{1}{M}P/Q^{\vee}} e^{2\pi i\sca{\mu}{y}}=\begin{cases} dM^n,\q \mu\in MQ^\vee  \\ 0,\q\q \mu\notin MQ^\vee.\end{cases}
\end{equation} 
\end{cor}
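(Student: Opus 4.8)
The plan is to apply the orthogonality relation over the finite abelian group $\frac{1}{M}P/Q^\vee$. The sum in question is a character sum: the map $y\mapsto e^{2\pi i\sca{\mu}{y}}$ is a well-defined homomorphism from the finite abelian group $G:=\frac{1}{M}P/Q^\vee$ into the unit circle $\C^\times$, as observed just above in the excerpt. Standard character theory of finite abelian groups then dictates the dichotomy: if the character is trivial the sum equals $|G|$, and if it is nontrivial the sum vanishes. So the proof splits into two tasks — computing $|G|$, and deciding exactly when the character is trivial.

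For the order, I would compute $\left|\frac{1}{M}P/Q^\vee\right|$ by the chain $\frac{1}{M}P \supset P \supset Q^\vee$, giving $\left|\frac{1}{M}P/Q^\vee\right| = \left|\frac{1}{M}P/P\right|\cdot\left|P/Q^\vee\right|$. The first factor is $\left|\frac{1}{M}P/P\right| = M^n$ since $P$ is a free $\Z$-module of rank $n$ and multiplication by $1/M$ is an isomorphism $P\to\frac1M P$ carrying $MP$ to $P$, so $\frac1M P/P\cong P/MP\cong(\Z/M\Z)^n$. The second factor is $|P/Q^\vee| = d$ by \eqref{dnum}. Hence $|G| = dM^n$, which supplies the first case of \eqref{bdis}.

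For the nontriviality criterion: the character $y\mapsto e^{2\pi i\sca{\mu}{y}}$ is trivial on $\frac1M P/Q^\vee$ precisely when $\sca{\mu}{\tfrac{\nu}{M}}\in\Z$ for every $\nu\in P$, i.e. when $\sca{\tfrac{\mu}{M}}{\nu}\in\Z$ for all $\nu\in P$. By the $\Z$-duality of $P$ and $Q^\vee$ (the weight lattice is the $\Z$-dual of $Q^\vee$ and vice versa), this is equivalent to $\tfrac{\mu}{M}\in Q^\vee$, that is $\mu\in MQ^\vee$. This is exactly the content of the Proposition immediately preceding the Corollary, so I would simply invoke it: for $\mu\notin MQ^\vee$ there exists $x\in\frac1M P/Q^\vee$ with $e^{2\pi i\sca{\mu}{x}}\neq 1$, so the character is nontrivial and the sum over the group is zero; for $\mu\in MQ^\vee$ every term is $1$ and the sum is $|G|=dM^n$.

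The only mildly delicate point — and the one I would be most careful about — is the vanishing half. Rather than re-deriving the character orthogonality relation, I would give the classical one-line argument: pick $x_0\in\frac1M P/Q^\vee$ with $\zeta:=e^{2\pi i\sca{\mu}{x_0}}\neq 1$ (available by the preceding Proposition), reindex the sum via the bijection $y\mapsto y+x_0$ of $G$ onto itself, obtaining $\sum_y e^{2\pi i\sca{\mu}{y}} = \sum_y e^{2\pi i\sca{\mu}{y+x_0}} = \zeta\sum_y e^{2\pi i\sca{\mu}{y}}$, whence $(1-\zeta)\sum_y e^{2\pi i\sca{\mu}{y}}=0$ and the sum vanishes since $\zeta\neq1$. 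Everything else is bookkeeping with lattice indices, and I do not anticipate any real obstacle there.
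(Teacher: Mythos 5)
Your proposal is correct and follows exactly the route the paper intends: the Corollary is stated without proof because it is the standard character-orthogonality argument over the finite abelian group $\tfrac{1}{M}P/Q^\vee$, with the preceding Proposition supplying precisely the nontriviality criterion and the index computation $\left|\tfrac{1}{M}P/Q^\vee\right|=M^n\,|P/Q^\vee|=dM^n$ supplying the order. Your explicit shift argument for the vanishing case and the lattice-chain computation of the group order are exactly the bookkeeping the authors left implicit.
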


The scalar product of two functions $f,g:F^{\sigma}_{P,M}\map \Com$ is defined as
\begin{equation} \label{scp}
\sca{f}{g}_{F^{\sigma}_{P,M}}= \sum_{a\in F^{\sigma}_{P,M}}\ep(a) f(a)\overline{g(a)},
\end{equation}
where the numbers $\ep (a)$ are determined by (\ref{epR}). The following theorem shows that the sets of weights $\Lambda^{\sigma}_{P,M}$ are in one-to-one corrrespondence with the lowest maximal sets of pairwise orthogonal orbit functions.
\begin{thm}
For any $\la,\la' \in\Lambda^{\sigma}_{P,M}$ it holds that
\begin{equation}\label{ortho}
 \sca{\phi^\sigma_\la}{\phi^\sigma_{\la'}}_{F^{\sigma}_{P,M}}=d\abs{W}M^n h_M(\la) \delta_{\la,\la'},
\end{equation}
where $d$, $h_M(\la)$ were defined by (\ref{dnum}), (\ref{hM}), 
respectively, and $|W|$ is the number of elements of the Weyl group~$W$.
\end{thm}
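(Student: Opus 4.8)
The plan is to collapse the $\ep$-weighted sum over the grid $F^\sigma_{P,M}$ to a flat sum over the whole torus $\tfrac1M P/Q^\vee$, expand the product of orbit functions, and finish with the exponential sum formula \eqref{bdis}.

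First I would note that the map $x\mapsto \phi^\sigma_\la(x)\,\overline{\phi^\sigma_{\la'}(x)}$ is a well-defined function on $\tfrac1M P/Q^\vee$ (same argument as for the exponentials used to state \eqref{bdis}) and is $W$-invariant there: by \eqref{Sssym} each $w\in W$ multiplies it by $\sigma(w)^2=1$. Because $W$ preserves both $\tfrac1M P$ and $Q^\vee$, properties \eqref{rfun1} and \eqref{rfun2} say that $\tfrac1M P\cap F$ meets each $W$-orbit on the torus exactly once, and \eqref{ept} says the orbit through the representative $a$ has $\ep(a)$ elements; hence summing any $W$-invariant function over the torus equals $\sum_{a\in\frac1M P\cap F}\ep(a)(\cdot)$. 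Splitting $F=F^\sigma\sqcup H^\sigma$ and using that $\phi^\sigma_\la$ vanishes on $H^\sigma$ by \eqref{Fss}, the $H^\sigma$-terms disappear and, recalling the scalar product \eqref{scp},
\begin{equation*}
\sca{\phi^\sigma_\la}{\phi^\sigma_{\la'}}_{F^{\sigma}_{P,M}}=\sum_{x\in\frac1M P/Q^\vee}\phi^\sigma_\la(x)\,\overline{\phi^\sigma_{\la'}(x)}.
\end{equation*}

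Next I would expand both factors via \eqref{genorb}, noting $\overline{\phi^\sigma_{\la'}(x)}=\sum_{w'\in W}\sigma(w')e^{-2\pi\i\sca{w'\la'}{x}}$ since $\sigma$ takes real values, interchange the finite sums, and apply \eqref{bdis} to the inner torus sum of $e^{2\pi\i\sca{w\la-w'\la'}{x}}$ (valid since $w\la-w'\la'\in P$). This gives
\begin{equation*}
\sca{\phi^\sigma_\la}{\phi^\sigma_{\la'}}_{F^{\sigma}_{P,M}}=dM^n\!\!\sum_{\substack{w,w'\in W\\ w\la-w'\la'\in MQ^\vee}}\!\!\sigma(w)\sigma(w').
\end{equation*}
The membership $w\la-w'\la'\in MQ^\vee$ means $w(\la/M)$ and $w'(\la'/M)$ agree on the torus. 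Since $\la,\la'\in\Lambda^\sigma_{P,M}$ forces $\la/M,\la'/M\in F^\sigma\subset F$, property \eqref{rfun2} gives $\la/M=\la'/M$ on the torus, and as $F$ is a strict fundamental domain for $W^{\mathrm{aff}}\supset T(Q^\vee)$ this forces $\la=\la'$; so the sum is empty unless $\la=\la'$.

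For $\la=\la'$ the constraint reads $w'^{-1}w\in\mathrm{Stab}(\la/M+Q^\vee)$; writing $u=w'^{-1}w$ and using $\sigma(w)\sigma(w')=\sigma(u)$ collapses the double sum to $|W|\sum_{u\in\mathrm{Stab}(\la/M+Q^\vee)}\sigma(u)$. Here the hypothesis $\la/M\in F^\sigma$ is decisive: by the definition \eqref{Fsigma} of $F^\sigma$ together with \eqref{rfunstab}, every $u\in\mathrm{Stab}(\la/M+Q^\vee)$ equals $\psi(\tilde u)$ for some $\tilde u\in\mathrm{Stab}_{W^{\mathrm{aff}}}(\la/M)$, on which $\sigma\circ\psi$ is trivial, so each $\sigma(u)=1$ and the sum equals $|\mathrm{Stab}(\la/M+Q^\vee)|=|\mathrm{Stab}_{W^{\mathrm{aff}}}(\la/M)|=h_M(\la)$ by \eqref{rfunstab} and \eqref{hM}. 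Assembling the factors yields $d|W|M^n h_M(\la)\,\delta_{\la,\la'}$. I expect the main obstacle to be the very first step — justifying the passage from the $\ep$-weighted sum over the $F^\sigma$-grid to the flat sum over the full torus $\tfrac1M P/Q^\vee$ — which requires combining the three $W$-action properties \eqref{rfun1}, \eqref{rfun2}, \eqref{rfunstab} with the boundary vanishing \eqref{Fss}; the rest is essentially bookkeeping around \eqref{bdis} and the defining condition of $F^\sigma$.
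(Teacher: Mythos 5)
Your proposal is correct and follows essentially the same route as the paper's proof: extend the sum from $F^\sigma_{P,M}$ to $\frac1M P\cap F$ using the vanishing on $H^\sigma$, pass to a flat sum over the torus $\frac1M P/Q^\vee$ via the $W^{\mathrm{aff}}$-invariance encoded in \eqref{rfun1}--\eqref{ept}, expand with \eqref{genorb} and apply \eqref{bdis}, and finally identify the surviving terms with $\psi(\mathrm{Stab}_{W^{\mathrm{aff}}}(\la/M))$, on which $\sigma$ is trivial because $\la/M\in F^\sigma$. The only cosmetic difference is that you collapse the double Weyl sum via $u=w'^{-1}w$ where the paper first factors out $\abs{W}$ using the $W$-invariance of the torus; the two manipulations are equivalent.
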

\begin{proof}Taking into account the set equality
$$F^{\sigma}_{P,M} \cup  \left[\frac{1}{M}P \cap H^\sigma   \right]= \frac{1}{M}P \cap F,   $$
the vanishing property \eqref{Fss} of $\phi^{\sigma}_\lambda$ on borders $H^\sigma$ gives the equality
\begin{equation*}
\sca{\phi^\sigma_\la}{\phi^\sigma_{\la'}}_{F^{\sigma}_{P,M}}= \sum_{a\in F^{\sigma}_{P,M}}\ep(a)\phi^\sigma_\la(a)\overline{\phi^\sigma_{\la'}(a)}=\sum_{a\in \frac{1}{M}P \cap F}\ep(a)\phi^\sigma_\la(a)\overline{\phi^\sigma_{\la'}(a)}.
\end{equation*}
The $W^{\mathrm{aff}}$-invariance properties \eqref{epshift} and \eqref{Sssym} imply the $W^{\mathrm{aff}}$-invariance of $\ep(a)\phi^\sigma_\la(a)\overline{\phi^\sigma_{\la'}(a)}$,
\begin{equation}\label{Waffin}
\ep(a)\phi^\sigma_\la(a)\overline{\phi^\sigma_{\la'}(a)}=\ep(w^{\mathrm{aff}}a)
\phi^\sigma_\la(w^{\mathrm{aff}}a)\overline{\phi^\sigma_{\la'}(w^{\mathrm{aff}}a)},\q w^{\mathrm{aff}}\in W^{\mathrm{aff}}.
\end{equation}
Using first the shift invariance in \eqref{Waffin}  with respect to shifts from $Q^\vee$ and \eqref{ept}, one gets  
$$\sum_{a\in \frac{1}{M}P \cap F}\ep(a)\phi^\sigma_\la(a)\overline{\phi^\sigma_{\la'}(a)}=\sum_{x\in \left[ \frac{1}{M}P/Q^\vee\right] \cap F}\wt\ep(x)\phi^\sigma_\la(x)\overline{\phi^\sigma_{\la'}(x)} $$
Secondly, the $W$-invariance in \eqref{Waffin} and relations \eqref{rfun1}, \eqref{rfun2} give
 $$ \sum_{x\in \left[ \frac{1}{M}P/Q^\vee\right] \cap F}\wt\ep(x)\phi^\sigma_\la(x)\overline{\phi^\sigma_{\la'}(x)} =\sum_{y\in\frac{1}{M}P/Q^\vee }\phi^\sigma_\la(y)\overline{\phi^\sigma_{\la'}(y)}.  $$
The $W$-invariance of $\frac{1}{M}P/Q^{\vee}$ allows to continue the calculation,
\begin{align}
\sum_{y\in\frac{1}{M}P/Q^\vee }\phi^\sigma_\la(y)\overline{\phi^\sigma_{\la'}(y)} = &\sum_{w'\in W}\sum_{w\in W} \sum_{y\in \frac{1}{M}P/Q^{\vee}}\sigma(ww')e^{2\pi\i\sca{w\la-w'\la'}{y}}\nonumber \\ = &\abs{W}\sum_{w'\in W}\sigma(w')\sum_{y \in \frac{1}{M}P/Q^{\vee}}e^{2\pi\i\sca{\la-w'\la'}{y}}. \label{ddd}
\end{align}
Note that $\la-w'\la'\in MQ^\vee$ for some $w'\in W$ means that $\la /M$ and $\la' /M$ lie in the same $W^{\mathrm{aff}}$-orbit. Since from definition \eqref{Ls} both $\la /M$ and $\la' /M$ are in $F$ and as the fundamental domain $F$ contains only one element from each $W^{\mathrm{aff}}$-orbit, $\la-w'\la'\in MQ^\vee$ implies $\la = \la'$. Thus if $\la \neq \la'$ then for all $w'\in W$ it holds that $\la-w'\la'\notin MQ^\vee$ and \eqref{bdis} forces $\sca{\phi^\sigma_\la}{\phi^\sigma_{\la'}}_{F^{\sigma}_{P,M}}=0$.

If $\la=\la'$ then in \eqref{ddd} are due to \eqref{bdis} non-zero summands only if $\la-w'\la\in MQ^\vee$, or equivalently $w'\in \psi \left(\mathrm{Stab}_{W^{\mathrm{aff}}}\left(\la/M\right)\right)$ and thus
\begin{align*}
\abs{W}\sum_{w'\in W}\sigma(w')\sum_{y \in \frac{1}{M}P/Q^{\vee}}e^{2\pi\i\sca{\la-w'\la}{y}} = & d\, |W|\, M^n\sum_{w'\in \psi\left(\mathrm{Stab}_{W^{\mathrm{aff}}}\left(\frac{\la}{M}\right)\right)}\sigma(w') . 
\end{align*}
Since for $w^{\mathrm{aff}}\in \mathrm{Stab}_{W^{\mathrm{aff}}}(\la/M)$ the property $\psi(w^{\mathrm{aff}})=1$  implies $w^{\mathrm{aff}}=1$ it holds that the subgroups
$\mathrm{Stab}_{W^{\mathrm{aff}}}\left(\la/M\right)$ and $\psi \left(\mathrm{Stab}_{W^{\mathrm{aff}}}\left(\la/M\right)\right) $ are isomorphic and thus it follows that
$$\sum_{w'\in \psi\left(\mathrm{Stab}_{W^{\mathrm{aff}}}\left(\frac{\la}{M}\right)\right)}\sigma(w')=\sum_{w^{\mathrm{aff}}\in \mathrm{Stab}_{W^{\mathrm{aff}}}\left(\frac{\la}{M}\right)}\sigma\circ \psi(w^{\mathrm{aff}}).$$
It remains to recall that $\la \in \Lambda^{\sigma}_{P,M}$ also means that $\la/M\in F^\sigma $ and together with definitions \eqref{Fsigma}, \eqref{hM} it gives 
$$\sum_{w^{\mathrm{aff}}\in \mathrm{Stab}_{W^{\mathrm{aff}}}\left(\frac{\la}{M}\right)}\sigma\circ \psi(w^{\mathrm{aff}})=\sum_{w^{\mathrm{aff}}\in \mathrm{Stab}_{W^{\mathrm{aff}}}\left(\frac{\la}{M}\right)}1= h_M(\la). $$   
\end{proof}

\vskip0.5cm\section{Discrete transforms, $S$-matrices and Kac-Peterson formula}

\subsection{Discrete orbit function transforms}\

The interpolating functions $I[f]^\sigma_M: \R^n\map\Com$ of any function $f:F^{\sigma}_{P,M}\map \Com$ are finite linear combinations 
\begin{align}
I[f]^\sigma_M(a):= \sum_{\la\in \Lambda^{\sigma}_{P,M}} c^\sigma_\la \phi^\sigma_\la(a) \label{intc}
\end{align}
such that
\begin{align}\label{intcs}
I[f]^\sigma_M(a)=& f(a), \q a\in F^{\sigma}_{P,M}.
\end{align}
The discrete orthogonality \eqref{ortho} and the completeness \eqref{complete} of the sets of functions $\phi^\sigma_\la(a)$, $\la \in \Lambda^{\sigma}_{P,M}$, $a \in F^{\sigma}_{P,M}$ ensure that  the coefficients $c^\sigma_\la$ are uniquely determined. The formulas for calculation of $c^\sigma_\la$ which constitute discrete orbit function transforms, are given by 
\begin{align}\label{trans}
c^\sigma_\la=& \frac{\sca{f}{\phi^\sigma_\la}_{F^{\sigma}_{P,M}}}{\sca{\phi^\sigma_\la}{\phi^\sigma_\la}_{F^{\sigma}_{P,M}}}=(d\abs{W} M^n h_M(\la))^{-1}\sum_{a\in F^{\sigma}_{P,M}}\ep(a) f(a)\overline{\phi^\sigma_\la(a)}
\end{align}
and the corresponding Plancherel formulas are of the form
\begin{align*}
\sum_{a\in F^{\sigma}_{P,M}} \ep(a)\abs{f(a)}^2 = & d \abs{W} M^n \sum_{\la\in\Lambda^{\sigma}_{P,M}}h_M(\la)|c^\sigma_\la|^2 .
\end{align*}

\subsection{$S$-matrices and Kac-Peterson formula}\  

For every simple Lie algebra $X_n$, there exists an untwisted affine Kac-Moody algebra $X^{(1)}_n$ with a horizontal subalgebra isomorphic to $X_n$ \cite{Kac}.  When the central element of $X_n^{(1)}$ is fixed to be a positive integer, the so-called level $k\in\N$, the integrable representations of the affine algebra are in one-to-one correspondence with the horizonal weights of $F_{P,M}$, with $M=k+g$. 
Kac and Peterson \cite{KP} discovered that the characters of these representations of $X^{(1)}_n$, the affine characters, form a finite-dimensional representation of the modular group $SL(2; \Z)$. 

Rational conformal field theory \cite{DMS} is the physical context for this result. Like all rational conformal field theories, the WZNW models can be formulated on any Riemann surface with a finite number of marked points. The corresponding correlation functions are written in terms of so-called conformal blocks, labelled by a trivalent graph that is a  degeneration of the marked Riemann surface. Since the trivalent graph is not unique, a linear transformation between the different sets of  conformal blocks must exist, so that the (physical) correlation function is unique \cite{MS}. 

If a  WZNW conformal field theory is considered on a torus, the resulting correlation function is the partition function, a sesquilinear combination of affine characters. These affine characters are the conformal blocks of the torus, labelled by a trivial trivalent graph, the circle.  But a circle can be obtained as the degenerate limit of the torus in an infinite number of different ways -- as the $a$ and $b$ cycles of the torus, e.g. The different choices are mapped to each other by the action of the modular group, with generators conventionally denoted $S$ and $T$. Therefore, there is a representation of the modular group of dimension equal to $|F_{P,k+g}|$.  In particular, the generator $S$ is represented as a unitary, symmetric matrix of the same dimension.  

Theorem 6.3 above proves the orthogonality (\ref{ortho}) of the Weyl-orbit functions $\varphi_\lambda^\sigma$.  The following unitary matrices are then easily constructed:
\begin{align}\label{sigmaKP} 
S^\sigma_{\la,\mu}=\frac{i^\frac{|\Pi|}{2}\phi^\sigma_\la (\frac{-\mu}{k+q^\sigma})}{\sqrt{d (k+q^\sigma)^n h_{k+q^\sigma}(\la)h_{k+q^\sigma}(\mu)}},\q \la,\mu \in \Lambda^{\sigma}_{P,k+q^\sigma}\ . 
\end{align}
Like the Kac-Peterson affine modular-$S$ matrices \cite{KP}, the matrices with entries $S^\sigma_{\la,\mu}$ are unitary and symmetric.  Indeed,  a simple check reveals that $S^{\sigma^e}_{\la,\mu}$ are precisely the elements of the standard Kac-Peterson matrices.  With the new weight-lattice discretization, the identity of the affine modular $S$-matrix and Weyl-orbit functions is established. 

When $\sigma\not=\sigma^e$, the matrices with elements (\ref{sigmaKP}) can be considered a generalization of the Kac-Peterson $S$ matrix.  We do not know if they are relevant to WZNW conformal field theories. We mention here, however, that discretized versions of the Weyl-orbit functions for both $\sigma=\sigma^e$ and $\sigma=\id$ were used in \cite{GJW}.

\vskip0.5cm\section{Conclusion} 

Let us first point out where to find the main results of this paper.  Formulas (\ref{FP}-\ref{complete}) establish the argument-label symmetry of the orbit functions in the new weight-lattice discretization.  The orthogonality of the orbit functions is established in Theorem 6.3. The affine $S^\sigma$ matrices defined by (\ref{sigmaKP}) generalize the Kac-Peterson $S=S^{\sigma^e}$ matrix, and establish the identity of the latter with weight-discretized Weyl-orbit functions. 

A comparison is in order of the new weight-lattice discretization with the coweight-lattice discretization of \cite{HP}.  The differences are made plain by comparing Figures 1 and 2 here with Figures 2 and 3 in \cite{HP}, respectively. All four diagrams treat the case of algebra $C_2$, and the difference in resolutions used ($M=3$ here and $M=4$ in Figs. 3 and 4 of \cite{HP}) does not obscure.  Another difference is that while Figs. 1 and 2 here deal with fragments $F^\sigma_{P,M}$ for all sign homomorphisms $\sigma$, ref. \cite{HP} only treats the $P^\vee$-fragment $F_M$. The fragments $F_M$ in \cite{HP}, and their generalizations $F^\sigma_M$, should now be denoted $F_{P^\vee, M}$, and $F^\sigma_{P^\vee, M}$, respectively. But a comparison of the two discretizations in the $\sigma=\id$ case will make the essential distinctions clear.  

There is a difference between $P^\vee$- and $P$-discretizations in the $C_2$ case because there is a short simple root $\alpha_1$ and so a corresponding long fundamental coweight $\omega_1^\vee = 2\omega_1$.  $F_{P, M}$ is a finer discretization of the fundamental region of $W^{\mathrm{aff}}$ than is $F_{P^\vee, M}$.  From Fig.~2 of \cite{HP}, we see that while the $P^\vee$-discretization chops the boundary from 0 to $\omega^\vee_2/m_2$ into $M$ segments, only $M/m_1=M/2$ segments occur between 0 and $\omega^\vee_1/m_1$, the vertex of $F$. On the other hand, the same diagram shows that there are $M=4$ segments between  0 and $\omega_1^\vee$, a vertex of $F^\vee$.  $P^\vee$-discretization creates a more `uniform' fragment of $F^\vee$ than of $F$. Consequently, as Fig.~3 illustrates, the orbit-function labels discretize a dilation $MF^\vee=4F^\vee$ of $F^\vee$, rather than of $F$. 

In contrast, Fig.~1 here shows that the $P$-discretization yields the same number of segments, $M=3$, on the edges between any 2 of the 3 vertices of $F$.  The result is illustrated in Fig.~2: the resulting Weyl-orbit function labels lie on a $P$-fragment of the $M$-dilation ($M=3$) of $F$. By factoring out the dilation size, $M$, both arguments and labels in the $P$-discretization can be treated on the same footing.  
Table~2 compares the treatment of arguments and labels in coweight- and weight-discretization of Weyl-orbit functions. 

{
\renewcommand{\arraystretch}{1.25}
\begin{table}
\begin{tabular}{|c||c|c|}
\hline
Discretization       & Arguments\quad & Labels\quad  \\
\hline\hline
Coweight & $P^\vee/M\,\cap\, F^\sigma$  & $P\,\cap\, MF^{\sigma\vee}$ \\ \hline
Weight & $P/M\,\cap\, F^\sigma$  & $P\,\cap\, MF^{\sigma}$ \\ \hline
\end{tabular}
\medskip
\caption{Weight- vs. coweight discretization of Weyl-orbit functions.}\label{tab2}
\end{table}
}

The results presented here describe another practicable discretization of Weyl-orbit functions of all types.  Besides coroot- and root-lattice discretizations, it is also likely the only other type of discretization that is natural from the Lie-theoretic point of view.  In addition, the new discretization identifies orbit $S$-functions and the affine Kac-Peterson modular $S$-matrices. The discovery of new properties of discretized orbit functions analogous to those of affine modular matrices (see \cite{Ga, Ga02, GW, GRW}, e.g.) that was begun in \cite{HW} should now continue easily. We also hope that the study of Weyl-orbit functions will impact conformal field theory.


\vskip0.5cm\section*{Acknowledgments}
Research support was provided by RVO68407700 (JH), a Discovery Grant from NSERC (MAW), and by the University of Lethbridge.

\end{document}